\documentclass[a4paper,12pt]{article}

\usepackage{amsmath,amssymb}
\usepackage{amscd}
\usepackage{mathrsfs}
\usepackage{amsthm}
\usepackage{color}
\usepackage{ascmac}

\usepackage{verbatim}

\usepackage{mathtools}
\usepackage{enumerate}

\usepackage[colorlinks=true, linkcolor=blue, citecolor=blue, urlcolor=blue]{hyperref}
\theoremstyle{plain}

\newtheorem{theorem}{\bf Theorem}[section]
\newtheorem{lemma}[theorem]{\bf Lemma}

\newtheorem{corollary}[theorem]{\bf Corollary}

\theoremstyle{definition}

\newtheorem{example}[theorem]{\bf Example}
\newtheorem{remark}[theorem]{\bf Remark}

  \makeatletter
  \newcommand{\subsubsubsection}{\@startsection{paragraph}{4}{\z@}%
    {1.0\Cvs \@plus.5\Cdp \@minus.2\Cdp}%
    {.1\Cvs \@plus.3\Cdp}%
    {\reset@font\sffamily\normalsize}
  }
  \makeatother
\makeatletter

\@addtoreset{equation}{section}
\makeatother

\title{A direct derivation of an effective Hamiltonian \\
	in non-relativistic quantum electrodynamics}

\author{Yasumichi Matsuzawa\thanks{Department of Mathematics, Faculty of Education, Shinshu University, 6-Ro, Nishi-nagano, Nagano 380-8544, Japan, e-mail: myasu@shinshu-u.ac.jp} }
\date{\today}

\begin{document}
\maketitle

\begin{abstract}
	We present a direct derivation of Arai's effective Hamiltonian in non-relativistic quantum electrodynamics without relying on the scaling limit. 
	Our result applies to a broader class of potentials, including the Rollnik class and confining potentials such as the harmonic potential.
\end{abstract}

\section{Introduction}

Non-relativistic quantum electrodynamics provides a framework for describing the interaction between charged particles and the quantized electromagnetic field, and plays a fundamental role in understanding the Lamb shift in hydrogen atoms \cite{PhysRev.72.339}.

Welton \cite{PhysRev.74.1157} explained the Lamb shift based on the physical intuition that vacuum fluctuations of the electromagnetic field induce fluctuations in the position of the electron, which leads to an effective averaging of the potential. 
This picture can be formulated in terms of an effective Hamiltonian in which the electron potential is modified to incorporate the contribution of the electromagnetic field.
Arai derived such an effective Hamiltonian via the scaling limit \cite{MR1075749} and studied its spectral properties to derive the Lamb shift \cite{MR2770092}.

In this paper, we give a direct derivation of the effective Hamiltonian obtained by Arai, without relying on the scaling limit.
We characterize the effective Hamiltonian through expectation values of the total Hamiltonian with respect to dressed electron states.
Our derivation consists of two parts.
First, following Arai \cite{MR1075749}, we construct a dressed electron state for each electron state as a superposition of the ground states of the fiber Hamiltonians weighted by the electron state.
Then, the effective Hamiltonian is characterized as the unique self-adjoint operator whose expectation values with respect to the electron states coincide with those of the total Hamiltonian with respect to the corresponding dressed electron states. 

Compared with \cite{MR1075749}, our result applies to a broader class of potentials, including the Rollnik class and confining potentials such as the harmonic potential. 
In particular, it covers classes of potentials that are not covered by the scaling limit in \cite{MR1075749}, but are treated in \cite{MR2770092}. 
In this sense, it provides a rigorous justification for the study of the effective Hamiltonian in \cite{MR2770092}.

The rest of this paper is organized as follows.
In Section \ref{sect;setup}, we describe our model.
In Section \ref{sect;dressed states}, we construct dressed electron states.
In Section \ref{sect;sa}, we define the total Hamiltonian and the effective Hamiltonian via quadratic form sums.
In Section \ref{sect;derivation}, we derive the effective Hamiltonian from the total Hamiltonian.
In Section \ref{sect;concluding_remarks}, we briefly comment on mass renormalization.

\section{Setup}\label{sect;setup}
We consider the Pauli--Fierz model \cite{PauliFierz1938} describing an electron interacting with the quantized electromagnetic field. 
Following \cite{MR1075749}, we simplify the model as follows:
\begin{itemize}
	\item the dipole approximation is used,
	\item the $\boldsymbol{A}^2$ term is neglected,
	\item mass renormalization of the electron is taken into account.
\end{itemize}
The model is defined within the framework of Fock space. 
For details on Fock space, we refer the reader to \cite{MR4812858}.
Let $d\geq2$ be the spatial dimension.
The Hilbert space $\mathscr{H}_\mathrm{tot}$ of the model is defined by 
\[
\mathscr{H}_\mathrm{tot}:=L^2(\mathbb{R}_{\boldsymbol{x}}^d)\otimes\mathscr{F}_\mathrm{b}(\mathscr{H}_\mathrm{ph}),\qquad \mathscr{H}_\mathrm{ph}:=L^2(\mathbb{R}_{\boldsymbol{k}}^d\times\{1,\dots,d-1\})
\]
where $\mathbb{R}_{\boldsymbol{x}}^d$ (resp. $\mathbb{R}_{\boldsymbol{k}}^d$) denotes $\mathbb{R}^d$ with variable $\boldsymbol{x}$ (resp. variable $\boldsymbol{k}$), and $\mathscr{F}_\mathrm{b}(\mathscr{H}_\mathrm{ph})$ denotes the Boson Fock space over $\mathscr{H}_\mathrm{ph}$. 
The total Hamiltonian $H_\mathrm{tot}$ of the model is defined by
\[
H_\mathrm{tot}:=H_0\dot{+}(V\otimes1),\qquad H_0:=-\frac{\hbar^2}{2m_0}\Delta\otimes1+1\otimes\mathrm{d}\Gamma_\mathrm{b}(\hbar \omega)-\frac{q}{m}\sum_{j=1}^dp_j\otimes A_j(\boldsymbol{0}),
\]
where $\dot{+}$ denotes the quadratic form sum, $\hbar>0$ is the reduced Planck constant, 
$m>0$ is the observed electron mass, 
$m_0>0$ is the bare electron mass, 
$q\in\mathbb{R}\setminus\{0\}$ is the coupling constant.
Each $p_j:=-\mathrm{i}\hbar\partial_{x_j}$ is the momentum operator, where $\partial_{x_j}$ is understood in the distributional sense. 
Moreover, $\Delta:=\sum_{j=1}^d(\partial_{x_j})^2$ is the Laplacian,
$V:\mathbb{R}_{\boldsymbol{x}}^d\to\mathbb{R}$ is a Borel function describing an external potential, 
$\omega:\mathbb{R}_{\boldsymbol{k}}^d\to[0,\infty)$ is a Borel function satisfying $\omega(\boldsymbol{k})>0$ for almost every $\boldsymbol{k}\in\mathbb{R}^d$.
Finally, $\mathrm{d}\Gamma_\mathrm{b}(\hbar\omega)$ denotes the second quantization of the multiplication operator by $\hbar\omega$, and each $A_j(\boldsymbol{0})$ is the quantized vector potential at the origin defined below.

To define $A_j(\boldsymbol{0})$, we fix polarization vectors 
\[
\boldsymbol{e}^{(\lambda)}:\mathbb{R}_{\boldsymbol{k}}^d\to\mathbb{R}^d,\qquad \lambda=1,\dots,d-1.
\]
That is, they satisfy
\begin{itemize}
	\item $\boldsymbol{e}^{(\lambda)}$ is Borel measurable for all $\lambda=1,\dots,d-1,$
	\item the set $\{\boldsymbol{k}/|\boldsymbol{k}|,\boldsymbol{e}^{(1)}(\boldsymbol{k}),\dots,\boldsymbol{e}^{(d-1)}(\boldsymbol{k})\}$ is an orthonormal basis of $\mathbb{R}^d$ for almost every $\boldsymbol{k}\in\mathbb{R}^d$.
\end{itemize}
Let $\hat{\rho}:\mathbb{R}_{\boldsymbol{k}}^d\to\mathbb{R}$ be a Borel function depending only on $|\boldsymbol{k}|$ such that $\hat{\rho}/\omega^{1/2},\hat{\rho}/\omega^{3/2}\in L^2(\mathbb{R}_{\boldsymbol{k}}^d)\setminus\{0\}$.
For each $j=1,\dots,d$, we define $g_j\in\mathscr{H}_\mathrm{ph}$ by
\[
g_j(\boldsymbol{k},\lambda):=\sqrt{\frac{\hbar}{\epsilon_0\omega(\boldsymbol{k})}}\hat{\rho}(\boldsymbol{k})e_j^{(\lambda)}(\boldsymbol{k}),\qquad \boldsymbol{k}\in\mathbb{R}^d,\ \lambda=1,\dots,d-1,
\]
where $\epsilon_0>0$ denotes the vacuum permittivity and $e_j^{(\lambda)}(\boldsymbol{k})$ denotes the $j$-th component of $\boldsymbol{e}^{(\lambda)}(\boldsymbol{k})$.
We then define $A_j(\boldsymbol{0}):=\Phi_\mathrm{S}(g_j)$, where $\Phi_\mathrm{S}(f)$ denotes the Segal field operator with test vector $f\in\mathscr{H}_\mathrm{ph}$.

\begin{remark}
	Physically, $d=3$ and $\omega(\boldsymbol{k})=c|\boldsymbol{k}|$, where $c>0$ is the speed of light.
	An important example of $\hat{\rho}$ used in the derivation of the Lamb shift \cite{MR2770092,PhysRev.74.1157} is 
	\[
	\hat{\rho}(\boldsymbol{k})=\begin{cases*}
		(2\pi)^{-3/2} & if $\kappa\leq|\boldsymbol{k}|\leq mc/\hbar$, \\
		0                 & otherwise,
	\end{cases*}
	\]
	where $\kappa$ is a constant satisfying $0<\kappa<mc/\hbar$.
\end{remark}

The relation between $m_0$ and $m$ is given by
\begin{equation}\label{relation btw m_0 and m}
\frac{1}{m_0}=\frac{1}{m}+\frac{d-1}{d}\frac{q^2}{\epsilon_0m^2}\int_{\mathbb{R}^d}\frac{|\hat{\rho}(\boldsymbol{k})|^2}{\omega(\boldsymbol{k})^2}\,\mathrm{d}\boldsymbol{k}.
\end{equation}
We further assume the following conditions on the potential $V$:
\begin{itemize}
	\item[\textup{(V.1)}] $V$ satisfies either of the following conditions:
	\begin{itemize}
		\item[\textup{(V.1-1)}] $|V|$ is infinitesimally form-bounded with respect to $-\Delta$.
		\item[\textup{(V.1-2)}] $V$ is bounded from below.
	\end{itemize} 
	\item[\textup{(V.2)}] $\int_{\mathbb{R}^d}|V(\boldsymbol{x})|\mathrm{e}^{-t|\boldsymbol{x}|^2}\,\mathrm{d}\boldsymbol{x}<\infty$ for all $t>0$. 
\end{itemize}

\begin{example}
	\begin{itemize}
		\item[(1)] Let $d=3$, and let $\mathcal{R}$ be the Rollnik class.
		That is, $\mathcal{R}$ is the set of Borel functions $V:\mathbb{R}_{\boldsymbol{x}}^3\to\mathbb{R}$ satisfying
		\[
		\int_{\mathbb{R}^3}\int_{\mathbb{R}^3}\frac{|V(\boldsymbol{x})||V(\boldsymbol{y})|}{|\boldsymbol{x}-\boldsymbol{y}|^2}\,\mathrm{d}\boldsymbol{x}\,\mathrm{d}\boldsymbol{y}<\infty.
		\]
		Then, every $V\in\mathcal{R}$ satisfies (V.1-1) and (V.2).
		Thus, so does every $V\in\mathcal{R}+L^\infty(\mathbb{R}_{\boldsymbol{x}}^3)$.
		\item[(2)] The harmonic potential $V(\boldsymbol{x}):=K|\boldsymbol{x}|^2/2$ on $\mathbb{R}^d$, where $K>0$ is a constant, satisfies (V.1-2) and (V.2).
	\end{itemize}
\end{example}

\begin{proof}
	We have only to prove that any $V\in\mathcal{R}$ satisfies (V.2), since every potential in $\mathcal{R}$ satisfies (V.1-1) \cite[Theorem 1.21]{MR455975}.
	Let $t>0$.
	Since every potential in $\mathcal{R}$ belongs to $L^1_\mathrm{loc}(\mathbb{R}_{\boldsymbol{x}}^3)$ \cite[Theorem 1.7]{MR455975}, it suffices to show that
	\begin{equation}\label{eq;Rolnik_implies_(V.2)}
	\int_{|\boldsymbol{x}|\geq R}|V(\boldsymbol{x})|\mathrm{e}^{-t|\boldsymbol{x}|^2}\,\mathrm{d}\boldsymbol{x}<\infty
	\end{equation} for some $R>0$.
	Since $V\in\mathcal{R}$, there exists $\boldsymbol{y}\in\mathbb{R}^3$ such that $\int_{\mathbb{R}^3}|V(\boldsymbol{x})|/|\boldsymbol{x}-\boldsymbol{y}|^2\,\mathrm{d}\boldsymbol{x}<\infty$.
	Take a sufficiently large $R>0$ so that for all $|\boldsymbol{x}|\geq R$, $\mathrm{e}^{-t|\boldsymbol{x}|^2}\leq1/|\boldsymbol{x}-\boldsymbol{y}|^2$ holds.
	Combining the above estimates, we obtain \eqref{eq;Rolnik_implies_(V.2)}.
\end{proof}

\section{Construction of dressed electron states}\label{sect;dressed states}

In this section, we recall the necessary parts of the arguments in \cite[Section III]{MR1075749}.
We consider the Hamiltonian $H_0$ corresponding to the case $V=0$.
Let $\mathcal{F}_d:L^2(\mathbb{R}_{\boldsymbol{x}}^d)\to L^2(\mathbb{R}_{\boldsymbol{P}}^d)$ be the Fourier transform defined by
\[
(\mathcal{F}_d\psi)(\boldsymbol{P}):=\frac{1}{(2\pi\hbar)^{d/2}}\int_{\mathbb{R}^d}\psi(\boldsymbol{x})\mathrm{e}^{-\mathrm{i}\boldsymbol{x}\cdot\boldsymbol{P}/\hbar}\,\mathrm{d}\boldsymbol{x},\qquad \psi\in L^2(\mathbb{R}_{\boldsymbol{x}}^d),\ \boldsymbol{P}\in\mathbb{R}^d
\]
in the $L^2$-sense.
Under the natural isomorphism
\[
L^2(\mathbb{R}_{\boldsymbol{P}}^d)\otimes\mathscr{F}_\mathrm{b}(\mathscr{H}_\mathrm{ph})
\cong L^2(\mathbb{R}_{\boldsymbol{P}}^d;\mathscr{F}_\mathrm{b}(\mathscr{H}_\mathrm{ph}))
=\int_{\mathbb{R}^d}^\oplus\mathscr{F}_\mathrm{b}(\mathscr{H}_\mathrm{ph})\,\mathrm{d}\boldsymbol{P},
\]
we obtain
\begin{equation}\label{eq;fiber decomposition}
(\mathcal{F}_d\otimes1)H_0(\mathcal{F}_d\otimes1)^*=\int_{\mathbb{R}^d}^\oplus H_0(\boldsymbol{P})\,\mathrm{d}\boldsymbol{P},
\end{equation}
where the fiber Hamiltonian $H_0(\boldsymbol{P})$ is given by
\[
H_0(\boldsymbol{P}):=\mathrm{d}\Gamma_\mathrm{b}(\hbar\omega)-\frac{q}{m}\Phi_\mathrm{S}(\boldsymbol{P}\cdot\boldsymbol{g})+\frac{|\boldsymbol{P}|^2}{2m_0},\qquad
\boldsymbol{P}\cdot\boldsymbol{g}:=\sum_{j=1}^dP_jg_j.
\]
Since $H_0(\boldsymbol{P})$ is a van Hove Hamiltonian \cite{MR4812858,MR2015428}, it holds that
\begin{equation*}
U(\boldsymbol{P})H_0(\boldsymbol{P})U(\boldsymbol{P})^*
=\mathrm{d}\Gamma_\mathrm{b}(\hbar\omega)+E_0(\boldsymbol{P}),\qquad U(\boldsymbol{P}):=\mathrm{e}^{\mathrm{i}q/(m\hbar)\Phi_\mathrm{S}(\mathrm{i}\boldsymbol{P}\cdot\boldsymbol{g}/\omega)},
\end{equation*}
where the ground state energy $E_0(\boldsymbol{P})$ is given by
\[
E_0(\boldsymbol{P}):=-\frac{q^2}{2m^2\hbar}\left\|\frac{\boldsymbol{P}\cdot\boldsymbol{g}}{\sqrt{\omega}}\right\|^2+\frac{|\boldsymbol{P}|^2}{2m_0}.
\]
Thus, $H_0(\boldsymbol{P})$ admits a unique ground state $\Phi_0(\boldsymbol{P})$ given by $\Phi_0(\boldsymbol{P}):=U(\boldsymbol{P})^*\Omega$,
where $\Omega\in\mathscr{F}_\mathrm{b}(\mathscr{H}_\mathrm{ph})$ is the Fock vacuum defined by $\Omega:=\{1,0,0,\dots\}$.
It follows from \eqref{relation btw m_0 and m} that $E_0(\boldsymbol{P})=|\boldsymbol{P}|^2/(2m)$.

As a corollary, we obtain
\begin{equation}\label{eq;diag of H_0}
UH_0U^*=-\frac{\hbar^2}{2m}\Delta\otimes1+1\otimes\mathrm{d}\Gamma_\mathrm{b}(\hbar\omega),
\end{equation}
where the unitary operator $U$ on $\mathscr{H}_\mathrm{tot}$ is defined by
\[
U:=(\mathcal{F}_d\otimes1)^*\left[\int_{\mathbb{R}^d}^\oplus U(\boldsymbol{P})\,\mathrm{d}\boldsymbol{P}\right](\mathcal{F}_d\otimes1).
\]
Thus, $H_0$ is self-adjoint and non-negative.

\begin{remark}
	The operator $U$ coincides with the operator denoted by $\mathrm{e}^{\mathrm{i}T}$ in \cite[Section III]{MR1075749}.
\end{remark}

For each $\psi\in L^2(\mathbb{R}_{\boldsymbol{x}}^d)$, we define a dressed electron state $\psi_\mathrm{dr}\in\mathscr{H}_\mathrm{tot}$ by $\psi_\mathrm{dr}:=U^*(\psi\otimes\Omega)$.
Under the natural isomorphism
\[
L^2(\mathbb{R}_{\boldsymbol{x}}^d)\otimes\mathscr{F}_\mathrm{b}(\mathscr{H}_\mathrm{ph})
\cong L^2(\mathbb{R}_{\boldsymbol{x}}^d;\mathscr{F}_\mathrm{b}(\mathscr{H}_\mathrm{ph})),
\]
it holds that
\begin{equation}\label{superposition rep}
\psi_\mathrm{dr}(\boldsymbol{x})=\frac{1}{(2\pi\hbar)^{d/2}}\int_{\mathbb{R}^d}(\mathcal{F}_d\psi)(\boldsymbol{P})\Phi_0(\boldsymbol{P})\mathrm{e}^{\mathrm{i}\boldsymbol{x}\cdot \boldsymbol{P}/\hbar}\,\mathrm{d}\boldsymbol{P},\qquad \mathrm{a.e.}\,\boldsymbol{x}\in\mathbb{R}^d,
\end{equation}
provided that $\mathcal{F}_d\psi\in L^1(\mathbb{R}_{\boldsymbol{P}}^d)$. 
In this sense, $\psi_\mathrm{dr}$ is a superposition of the ground states $\Phi_0(\boldsymbol{P})$ of the fiber Hamiltonians $H_0(\boldsymbol{P})$ weighted by the electron state $\psi$.

\section{Self-adjoint realization of the Hamiltonians}\label{sect;sa}

We use the theory of quadratic forms to define self-adjoint operators.
For details on quadratic forms, we refer the reader to \cite{MR493420,MR2953553}.

\subsection{The total Hamiltonian}

\begin{theorem}\label{thm;tot_ham}
	The quadratic form sum $H_\mathrm{tot}=H_0\dot{+}(V\otimes1)$ exists.
	Consequently, $H_\mathrm{tot}$ is self-adjoint and bounded from below.
\end{theorem}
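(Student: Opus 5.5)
Under the unitary $U$ the free Hamiltonian becomes decoupled, so I will move the whole problem there. By \eqref{eq;diag of H_0}, $UH_0U^*=-\frac{\hbar^2}{2m}\Delta\otimes1+1\otimes\mathrm{d}\Gamma_\mathrm{b}(\hbar\omega)=:\tilde H_0$, which is self-adjoint and non-negative. The quadratic form sum $H_0\dot{+}(V\otimes1)$ exists iff $\tilde H_0\dot{+}U(V\otimes1)U^*$ exists, and then the two are unitarily equivalent. So I need to understand the conjugated potential $U(V\otimes1)U^*$, and I split according to (V.1).

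\emph{Case (V.1-2).} Write $V=V_+-V_-$ with $V_-$ bounded. The form sum $H_0\dot{+}(V_+\otimes1)$ of two non-negative self-adjoint operators exists as soon as the form domain $Q(H_0)\cap Q(V_+\otimes1)$ is dense. Adding the bounded operator $-V_-\otimes1$ preserves closedness and semiboundedness. For density I will use condition (V.2), and expect to show that the dense set $U^*(C_0^\infty\otimes_{\mathrm{alg}}\mathscr{D})$ lies in $Q(V_+\otimes 1)$, where $\mathscr{D}$ is the finite-particle vectors with nice test functions (or coherent-type vectors). For this I need to compute $U$ concretely. Since $U(\boldsymbol P)=\exp(\mathrm{i}\frac{q}{m\hbar}\Phi_\mathrm{S}(\mathrm{i}\boldsymbol P\cdot\boldsymbol g/\omega))$ is linear in $\boldsymbol P$ in the exponent, in position space
\[
U=\exp\Bigl(\mathrm{i}\tfrac{q}{m\hbar}\textstyle\sum_j p_j\otimes\Phi_\mathrm{S}(\mathrm{i}g_j/\omega)\Bigr).
\]
The operators $\Phi_\mathrm{S}(\mathrm{i}g_j/\omega)$ commute, because $\operatorname{Im}\langle \mathrm{i}g_j/\omega,\mathrm{i}g_l/\omega\rangle=0$ since the $g_j$ are real. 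Hence $U$ is a translation in $\boldsymbol x$ by the commuting self-adjoint vector $\boldsymbol\xi:=\frac{q}{m}(\Phi_\mathrm{S}(\mathrm{i}g_j/\omega))_j$. Consequently $U(V\otimes1)U^*=V(\boldsymbol x\otimes 1+1\otimes\boldsymbol\xi)$, defined by joint functional calculus. In the Schrödinger (Q-space) representation diagonalizing $\boldsymbol\xi$, the vacuum has a Gaussian distribution for $\boldsymbol\xi$. The assumption $\hat\rho/\omega^{3/2}\in L^2\setminus\{0\}$, together with the isotropy of $\hat\rho$, makes its covariance a positive multiple of the identity. Then $\langle \psi\otimes\Omega, V(\boldsymbol x+\boldsymbol\xi)\,\psi\otimes\Omega\rangle$ involves $|V|$ convolved with a nondegenerate Gaussian, which is finite by (V.2) for $\psi\in C_0^\infty$. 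Extending this to a dense set of $\psi\otimes F(\boldsymbol\xi)\Omega$-type vectors, e.g.\ polynomials in fields times $\Omega$, gives density, again by (V.2) since the weights remain Gaussian times polynomials. Any polynomial growth can be absorbed by shrinking $t$.

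\emph{Case (V.1-1).} Here I will use KLMN: it suffices that $|V|\otimes 1$ is infinitesimally form-bounded with respect to $H_0$, equivalently that $U(|V|\otimes1)U^*=|V|(\boldsymbol x+\boldsymbol\xi)$ is infinitesimally form-bounded with respect to $\tilde H_0$. Fiberwise in the $\boldsymbol\xi$-spectral representation, $-\Delta_{\boldsymbol x}$ commutes with translation by a fixed $\boldsymbol\xi$. Hence for each fixed value $\boldsymbol\xi$, $\langle f,|V|(\cdot+\boldsymbol\xi)f\rangle\le \epsilon\|\nabla f\|^2+C_\epsilon\|f\|^2$ with constants independent of $\boldsymbol\xi$. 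Integrating over the spectral measure of $\boldsymbol\xi$ (the operator $-\Delta\otimes1$ decomposes along $\boldsymbol\xi$ as well, since $1\otimes\boldsymbol\xi$ commutes with it) yields $|V|(\boldsymbol x+\boldsymbol\xi)\le \epsilon(-\Delta\otimes 1)+C_\epsilon\le \epsilon'\tilde H_0+C_\epsilon$. KLMN then gives existence, self-adjointness and lower boundedness. Here (V.2) is not even needed.

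\emph{Main obstacle.} The main difficulty is the rigorous identification of $U$ as the operator-valued translation $\boldsymbol x\mapsto\boldsymbol x+\boldsymbol\xi$, and hence of $U(V\otimes1)U^*$ as $V(\boldsymbol x+\boldsymbol\xi)$ for a merely Borel $V$. I would establish it first for $V$ of the form $\mathrm{e}^{\mathrm{i}\boldsymbol a\cdot\boldsymbol x}$ using Weyl relations on the fibers. I would then extend by the functional calculus/monotone class argument to bounded Borel $V$, and to unbounded $V$ by monotone truncation. Once this is in place, both cases reduce to the estimates above.
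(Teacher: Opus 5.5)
Your proof is correct, but it takes a longer route than the paper. You conjugate the potential, $U(V\otimes1)U^*=V(\boldsymbol{x}\otimes1+1\otimes\boldsymbol{\xi})$, whereas the paper never computes $U(V\otimes1)U^*$ at all. In case (V.1-1) the paper uses only that $U$ is decomposable over $\boldsymbol{P}$, so that $U(-\Delta\otimes1)^{1/2}U^*=(-\Delta\otimes1)^{1/2}$. Together with \eqref{eq;diag of H_0} this gives $\|[-\hbar^2(2m)^{-1}\Delta\otimes1]^{1/2}\Psi\|\le\|H_0^{1/2}\Psi\|$. The trivially tensorized form bound of $|V|\otimes1$ by $-\Delta\otimes1$ therefore passes straight to $H_0$, and KLMN applies. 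Your fiberwise argument over the spectrum of $\boldsymbol{\xi}$ uses the same fact (translations commute with $\Delta$) from the other side. It is exactly what creates the ``main obstacle'' you flag, which the paper's formulation avoids entirely. In case (V.1-2) the paper takes the un-conjugated set $C_0^\infty(\mathbb{R}^d_{\boldsymbol{x}})\hat{\otimes}\mathrm{dom}(\mathrm{d}\Gamma_\mathrm{b}(\hbar\omega)^{1/2})$. It lies in $\mathrm{dom}(H_0^{1/2})$ by the explicit form of $H_0$, and in $\mathrm{dom}(|V|^{1/2}\otimes1)$ merely because $V\in L^1_\mathrm{loc}$. So this part of the theorem needs only local integrability, not (V.2). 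Your conjugated core makes membership in $Q(H_0)$ trivial, but membership in $Q(V_+\otimes1)$ becomes costly (Q-space, conditional Gaussian expectations) and genuinely depends on (V.2). For $V(\boldsymbol{x})=\mathrm{e}^{|\boldsymbol{x}|^3}$, already $\psi_\mathrm{dr}=U^*(\psi\otimes\Omega)\notin Q(V\otimes1)$ for every $\psi\neq0$, although the form sum exists. What your route buys is conceptual. Since $\boldsymbol{\xi}$ is Gaussian in the vacuum with variance $2a$ per component, the identity $\langle\psi\otimes\Omega,V(\boldsymbol{x}+\boldsymbol{\xi})\,\psi\otimes\Omega\rangle=\int V_\mathrm{eff}|\psi|^2$ is immediate. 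You thus essentially obtain Lemma~\ref{lem;eff2} in Welton's smeared-position form, which the paper proves separately via $\langle\Phi_0(\boldsymbol{P}),\Phi_0(\boldsymbol{P}^\prime)\rangle$. Two minor points. First, your dense set must use polynomials in $\Phi_\mathrm{S}(f)$ with $f$ ranging over a dense set, since vectors $F(\boldsymbol{\xi})\Omega$ alone are not dense; this is harmless, because $\mathbb{E}[|\phi|^2\mid\boldsymbol{\xi}]$ is still a polynomial in $\boldsymbol{\xi}$ by joint Gaussianity. Second, you only need the covariance of $\boldsymbol{\xi}$ to be positive definite, not a multiple of the identity. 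This holds without any isotropy of $\omega$, because for each unit vector $\boldsymbol{v}$ one has $1-(\boldsymbol{v}\cdot\boldsymbol{k}/|\boldsymbol{k}|)^2>0$ for a.e.\ $\boldsymbol{k}$.
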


\begin{proof}
We first consider the case where $V$ satisfies (V.1-1).
We show that $V\otimes1$ is infinitesimally form-bounded with respect to $H_0$.
Since $|V|$ is infinitesimally form-bounded with respect to $-\Delta$, for any $\alpha>0$, there exists $\beta\geq0$ such that
\[
\|(|V|\otimes1)^{1/2}\Psi\|^2\leq \alpha\left\|\left[-\hbar^2(2m)^{-1}\Delta\otimes1\right]^{1/2}\Psi\right\|^2+\beta\|\Psi\|^2,\qquad \forall\Psi\in\mathrm{dom}(|\Delta|^{1/2}\otimes1).
\]
The equalities $U(-\Delta\otimes1)^{1/2}U^*=(-\Delta\otimes1)^{1/2}$ and \eqref{eq;diag of H_0} yield that 
\begin{align*}
\left\|\left[-\hbar^2(2m)^{-1}\Delta\otimes1\right]^{1/2}\Psi\right\|
&=\left\|\left[-\hbar^2(2m)^{-1}\Delta\otimes1\right]^{1/2}U\Psi\right\|\\
&\leq\left\|\left[-\hbar^2(2m)^{-1}\Delta\otimes1+1\otimes\mathrm{d}\Gamma_\mathrm{b}(\hbar\omega)\right]^{1/2}U\Psi\right\|
=\|H_0^{1/2}\Psi\|
\end{align*}
for all $\Psi\in\mathrm{dom}(H_0^{1/2})$.
Hence
\[
\|(|V|\otimes1)^{1/2}\Psi\|^2\leq\alpha\|H_0^{1/2}\Psi\|^2+\beta\|\Psi\|^2,\qquad \forall\Psi\in\mathrm{dom}(H_0^{1/2}).
\]
This implies that $|V|\otimes1$ (and thus $V\otimes1$) is infinitesimally form-bounded with respect to $H_0$.

We next consider the case where $V$ satisfies (V.1-2).
Since $H_0$ and $V\otimes1$ are bounded from below, it suffices to show that $\mathrm{dom}(H_0^{1/2})\cap\mathrm{dom}(|V|^{1/2}\otimes1)$ is dense.
By the definition of $H_0$, we have
\[
C_0^\infty(\mathbb{R}_{\boldsymbol{x}}^d)\hat{\otimes}\mathrm{dom}(\mathrm{d}\Gamma_\mathrm{b}(\hbar\omega)^{1/2})
\subset \mathrm{dom}(H_0)\subset\mathrm{dom}(H_0^{1/2}),
\]
where $C_0^\infty(\mathbb{R}_{\boldsymbol{x}}^d)$ denotes the space of $C^\infty$-functions with compact support, and $\hat{\otimes}$ denotes the algebraic tensor product.
On the other hand, by the condition (V.2), $V$ is in $L_\mathrm{loc}^1(\mathbb{R}_{\boldsymbol{x}}^d)$, and thus $\mathrm{dom}(|V|^{1/2})$ contains $C_0^\infty(\mathbb{R}_{\boldsymbol{x}}^d)$. 
Hence, $\mathrm{dom}(|V|^{1/2}\otimes1)$ also contains $C_0^\infty(\mathbb{R}_{\boldsymbol{x}}^d)\hat{\otimes}\mathrm{dom}(\mathrm{d}\Gamma_\mathrm{b}(\hbar\omega)^{1/2})$.
\end{proof}

\subsection{The effective Hamiltonian}

We define the effective potential $V_\mathrm{eff}:\mathbb{R}_{\boldsymbol{x}}^d\to\mathbb{R}$ by
\begin{equation}\label{eq;def_of_eff_potential}
V_\mathrm{eff}(\boldsymbol{x}):=\frac{1}{(4\pi a)^{d/2}}\int_{\mathbb{R}^d}V(\boldsymbol{y})\mathrm{e}^{-|\boldsymbol{x}-\boldsymbol{y}|^2/(4a)}\,\mathrm{d}\boldsymbol{y},\qquad \boldsymbol{x}\in\mathbb{R}^d
\end{equation}
with
\begin{equation}\label{eq;def_of_a}
a:=\frac{d-1}{4d}\frac{\hbar q^2}{\epsilon_0m^2}\int_{\mathbb{R}^d}\frac{|\hat{\rho}(\boldsymbol{k})|^2}{\omega(\boldsymbol{k})^3}\,\mathrm{d}\boldsymbol{k}.
\end{equation}
Note that, by the condition (V.2), the right-hand side of \eqref{eq;def_of_eff_potential} absolutely converges and $V_\mathrm{eff}$ is continuous on $\mathbb{R}^d$. 

\begin{theorem}\label{thm;eff hamiltonian}
The quadratic form sum
\begin{equation}\label{eq;def_of_Heff}
H_\mathrm{eff}:=-\frac{\hbar^2}{2m}\Delta\dot{+}V_\mathrm{eff}
\end{equation}
exists.
Consequently, $H_\mathrm{eff}$ is self-adjoint and bounded from below.
\end{theorem}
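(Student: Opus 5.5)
I will mirror the two-case structure of the proof of Theorem \ref{thm;tot_ham}, writing $V_\mathrm{eff}=G_a*V$, where $G_a(\boldsymbol{x}):=(4\pi a)^{-d/2}\mathrm{e}^{-|\boldsymbol{x}|^2/(4a)}$ is the heat kernel at time $a$. Note that $a>0$, since $\hat{\rho}/\omega^{3/2}\in L^2\setminus\{0\}$ and $q\neq0$. The key observation is that $|V_\mathrm{eff}|\leq G_a*|V|$ pointwise. Moreover, convolution with the probability density $G_a$ is an average of translates $V(\cdot-\boldsymbol{y})$. Form bounds relative to $-\Delta$ are translation invariant, because $-\Delta$ commutes with translations.

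\textbf{Case (V.1-1).} I will show that $|V_\mathrm{eff}|$ is infinitesimally form-bounded with respect to $-\Delta$. Fix $\alpha>0$ and take $\beta$ with
\[
\langle\varphi,|V|\varphi\rangle\leq\alpha\|(-\Delta)^{1/2}\varphi\|^2+\beta\|\varphi\|^2
\]
for all $\varphi\in\mathrm{dom}(|\Delta|^{1/2})$. Let $\tau_{\boldsymbol{y}}$ denote translation by $\boldsymbol{y}$. Applying the bound to $\tau_{\boldsymbol{y}}\varphi$ and using $\|(-\Delta)^{1/2}\tau_{\boldsymbol{y}}\varphi\|=\|(-\Delta)^{1/2}\varphi\|$ gives
\[
\int|V(\boldsymbol{x}-\boldsymbol{y})||\varphi(\boldsymbol{x})|^2\,\mathrm{d}\boldsymbol{x}\leq\alpha\|(-\Delta)^{1/2}\varphi\|^2+\beta\|\varphi\|^2
\]
uniformly in $\boldsymbol{y}$. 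I then multiply by $G_a(\boldsymbol{y})$ and integrate in $\boldsymbol{y}$. Since everything is nonnegative, Tonelli's theorem applies and yields
\[
\langle\varphi,(G_a*|V|)\varphi\rangle\leq\alpha\|(-\Delta)^{1/2}\varphi\|^2+\beta\|\varphi\|^2 .
\]
Hence $|V_\mathrm{eff}|$, and therefore $V_\mathrm{eff}$, is infinitesimally form-bounded with respect to $-\hbar^2(2m)^{-1}\Delta$. The KLMN theorem then gives the form sum, which is self-adjoint and bounded below.

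\textbf{Case (V.1-2).} If $V\geq-c$, then $V_\mathrm{eff}\geq-c$, since $\int G_a=1$. So both $-\hbar^2(2m)^{-1}\Delta$ and $V_\mathrm{eff}$ are bounded from below, and it suffices that the intersection of their form domains is dense. By (V.2), $V_\mathrm{eff}$ is finite and continuous, as the paper notes, so it is locally bounded. Hence $C_0^\infty(\mathbb{R}^d)\subset\mathrm{dom}(|V_\mathrm{eff}|^{1/2})\cap\mathrm{dom}(|\Delta|^{1/2})$, and this set is dense. The form sum of two closed forms that are bounded below, defined on this dense domain, is closed and bounded below. It therefore defines a self-adjoint operator bounded from below.

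The only step needing care is the measurability and Tonelli justification in the averaging argument. This is routine, since $(\boldsymbol{x},\boldsymbol{y})\mapsto|V(\boldsymbol{x}-\boldsymbol{y})||\varphi(\boldsymbol{x})|^2G_a(\boldsymbol{y})$ is a nonnegative Borel function. One must also check that the pointwise inequality $|V_\mathrm{eff}|\leq G_a*|V|$ transfers the form bound. It does, since $\mathrm{dom}(|V_\mathrm{eff}|^{1/2})\supset\mathrm{dom}((G_a*|V|)^{1/2})\supset\mathrm{dom}(|\Delta|^{1/2})$.
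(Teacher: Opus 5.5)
Your proof is correct. In case (V.1-2) it is the paper's argument, but in case (V.1-1) you take a genuinely different route.

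\textbf{What the paper does in case (V.1-1).} It does not estimate $V_\mathrm{eff}$ directly. It takes the infinitesimal form bound of $|V|\otimes1$ relative to $H_0$ from the proof of Theorem \ref{thm;tot_ham}. It then pulls this bound back along the isometry $\psi\mapsto\psi_\mathrm{dr}$. Lemma \ref{lem;eff1} identifies $\|H_0^{1/2}\psi_\mathrm{dr}\|^2$ with $\frac{\hbar^2}{2m}\|(-\Delta)^{1/2}\psi\|^2$. Lemma \ref{lem;eff2}, applied to $|V|$, identifies $\int|V|\,\|\psi_\mathrm{dr}(\boldsymbol{x})\|^2\,\mathrm{d}\boldsymbol{x}$ with $\int|V|_\mathrm{eff}|\psi|^2\,\mathrm{d}\boldsymbol{x}$.

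\textbf{What you do instead.} You average the translated form bound against the probability density $G_a$. This uses only the translation invariance of $-\Delta$ and Tonelli.

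\textbf{How the two relate.} Both give the same constants $\alpha,\beta$, and the mechanisms are closely linked. The identity in Lemma \ref{lem;eff2} amounts to $\|\psi_\mathrm{dr}(\boldsymbol{x})\|^2=(G_a*|\psi|^2)(\boldsymbol{x})$ a.e.; that is, dressing smears the position density by $G_a$.

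\textbf{What each approach buys.} Your version is more elementary and self-contained. It needs no fiber decomposition, no van Hove diagonalization, and no overlap formula for $\langle\Phi_0(\boldsymbol{P}),\Phi_0(\boldsymbol{P}^\prime)\rangle$. It also shows more generally that convolving $|V|$ with any probability measure preserves relative form bounds with respect to $-\Delta$. The paper's version costs nothing extra in context, because Lemmas \ref{lem;eff1} and \ref{lem;eff2} are needed anyway for Theorem \ref{thm;derivation}. It also shows the form-boundedness of $V_\mathrm{eff}$ as inherited from that of $V\otimes1$ through the dressed states, which is the structural point of the paper.
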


We need two lemmas to prove Theorem \ref{thm;eff hamiltonian}.

\begin{lemma}\label{lem;eff1}
	Let $\psi\in L^2(\mathbb{R}_{\boldsymbol{x}}^d)$.
	Then, $\psi_\mathrm{dr}\in\mathrm{dom}(H_0^{1/2})$ if and only if $\psi\in\mathrm{dom}(|\Delta|^{1/2})$.
	In this case, we have
	\[
	\|H_0^{1/2}\psi_\mathrm{dr}\|^2=\frac{\hbar^2}{2m}\|(-\Delta)^{1/2}\psi\|^2.
	\]
\end{lemma}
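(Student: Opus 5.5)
The plan is to transport everything through the unitary $U$ and use the diagonalization \eqref{eq;diag of H_0}. Write $\widetilde{H}_0:=UH_0U^*=-\frac{\hbar^2}{2m}\Delta\otimes1+1\otimes\mathrm{d}\Gamma_\mathrm{b}(\hbar\omega)$. Since $U$ is unitary, functional calculus gives $UH_0^{1/2}U^*=\widetilde{H}_0^{1/2}$ and $\mathrm{dom}(H_0^{1/2})=U^*\mathrm{dom}(\widetilde{H}_0^{1/2})$. By definition $\psi_\mathrm{dr}=U^*(\psi\otimes\Omega)$, so $\psi_\mathrm{dr}\in\mathrm{dom}(H_0^{1/2})$ if and only if $\psi\otimes\Omega\in\mathrm{dom}(\widetilde{H}_0^{1/2})$. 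When this holds,
\[
\|H_0^{1/2}\psi_\mathrm{dr}\|=\|\widetilde{H}_0^{1/2}(\psi\otimes\Omega)\|.
\]
This reduces the lemma to a statement about the free, decoupled operator acting on a product vector.

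For the decoupled operator, $-\frac{\hbar^2}{2m}\Delta\otimes1$ and $1\otimes\mathrm{d}\Gamma_\mathrm{b}(\hbar\omega)$ are strongly commuting non-negative self-adjoint operators. I will use the joint spectral representation: Fourier transform in $\boldsymbol{x}$ and the $n$-particle decomposition of the Fock space. In this representation $\widetilde{H}_0$ is multiplication by
\[
\frac{|\boldsymbol{P}|^2}{2m}+\hbar\sum_{i=1}^n\omega(\boldsymbol{k}_i)
\]
on the $n$-photon sector. The vector $\psi\otimes\Omega$ lives entirely in the $n=0$ sector, where the second term vanishes. Hence $\psi\otimes\Omega\in\mathrm{dom}(\widetilde{H}_0^{1/2})$ if and only if
\[
\int\frac{|\boldsymbol{P}|^2}{2m}|(\mathcal{F}_d\psi)(\boldsymbol{P})|^2\,\mathrm{d}\boldsymbol{P}<\infty,
\]
which is exactly the condition $\psi\in\mathrm{dom}(|\Delta|^{1/2})$. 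The same computation gives
\[
\|\widetilde{H}_0^{1/2}(\psi\otimes\Omega)\|^2=\frac{\hbar^2}{2m}\|(-\Delta)^{1/2}\psi\|^2.
\]

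An alternative route avoids the explicit joint representation and uses the quadratic form characterization. For strongly commuting non-negative $A,B$ one has $\mathrm{dom}((A+B)^{1/2})=\mathrm{dom}(A^{1/2})\cap\mathrm{dom}(B^{1/2})$ and $\|(A+B)^{1/2}\Phi\|^2=\|A^{1/2}\Phi\|^2+\|B^{1/2}\Phi\|^2$. Then one checks that $\Omega\in\mathrm{dom}(\mathrm{d}\Gamma_\mathrm{b}(\hbar\omega))$ with $\mathrm{d}\Gamma_\mathrm{b}(\hbar\omega)\Omega=0$, and that $\psi\otimes\Omega\in\mathrm{dom}((-\Delta)^{1/2}\otimes1)$ if and only if $\psi\in\mathrm{dom}(|\Delta|^{1/2})$, since $\|\Omega\|=1$.

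The only delicate point is justifying that the form domain of the sum is the intersection of the form domains, including the "only if" direction. This follows from the strong commutativity, which the tensor-product structure guarantees, so I do not expect a genuine obstacle; the rest is bookkeeping.
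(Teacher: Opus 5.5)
Your argument is correct and is essentially the paper's proof. The paper computes $\|H_0^{1/2}\psi_\mathrm{dr}\|^2$ fiberwise through \eqref{eq;fiber decomposition}, using that $\psi_\mathrm{dr}$ has fibers $(\mathcal{F}_d\psi)(\boldsymbol{P})\Phi_0(\boldsymbol{P})$ with $\|H_0(\boldsymbol{P})^{1/2}\Phi_0(\boldsymbol{P})\|^2=E_0(\boldsymbol{P})=|\boldsymbol{P}|^2/(2m)$; you instead use the integrated identity \eqref{eq;diag of H_0} with $\mathrm{d}\Gamma_\mathrm{b}(\hbar\omega)\Omega=0$, which is the same computation because $U$ is the direct integral of the $U(\boldsymbol{P})$, and your joint-spectral-representation step also spells out the ``only if'' domain direction that the paper leaves implicit.
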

\begin{proof}
	By the fiber decomposition \eqref{eq;fiber decomposition} and the definition of $U$ and $\psi_\mathrm{dr}$, we have
	\begin{align*}
		\|H_0^{1/2}\psi_\mathrm{dr}\|^2
		&=\int_{\mathbb{R}^d}\left\|H_0(\boldsymbol{P})^{1/2}(\mathcal{F}_d\psi)(\boldsymbol{P})\Phi_0(\boldsymbol{P})\right\|_{\mathscr{F}_\mathrm{b}(\mathscr{H}_\mathrm{ph})}^2\,\mathrm{d}\boldsymbol{P}\\
		&=\int_{\mathbb{R}^d}E_0(\boldsymbol{P})|(\mathcal{F}_d\psi)(\boldsymbol{P})|^2\,\mathrm{d}\boldsymbol{P}
		=\frac{\hbar^2}{2m}\|(-\Delta)^{1/2}\psi\|^2.
	\end{align*}
	This completes the proof.
\end{proof}

\begin{lemma}\label{lem;eff2}
	Let $\psi\in L^2(\mathbb{R}_{\boldsymbol{x}}^d)$.
	Then, $\psi_\mathrm{dr}\in\mathrm{dom}(|V|^{1/2}\otimes1)$ if and only if $\psi\in\mathrm{dom}\left((|V|_\mathrm{eff})^{1/2}\right)$.
	In this case, we have $\psi\in\mathrm{dom}(|V_\mathrm{eff}|^{1/2})$ and
	\begin{equation}\label{eq;potential}
	\int_{\mathbb{R}^d}V(\boldsymbol{x})\|\psi_\mathrm{dr}(\boldsymbol{x})\|_{\mathscr{F}_\mathrm{b}(\mathscr{H}_\mathrm{ph})}^2\,\mathrm{d}\boldsymbol{x}
	=\int_{\mathbb{R}^d}V_\mathrm{eff}(\boldsymbol{x})|\psi(\boldsymbol{x})|^2\,\mathrm{d}\boldsymbol{x}.
	\end{equation}
\end{lemma}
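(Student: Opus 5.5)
Our plan is to compute the fibre-space density $\boldsymbol{x}\mapsto\|\psi_\mathrm{dr}(\boldsymbol{x})\|^2_{\mathscr{F}_\mathrm{b}(\mathscr{H}_\mathrm{ph})}$ explicitly. It should equal the heat-kernel convolution $(G_a*|\psi|^2)(\boldsymbol{x})$, where $G_a(\boldsymbol{z}):=(4\pi a)^{-d/2}\mathrm{e}^{-|\boldsymbol{z}|^2/(4a)}$. Once this identity is available, the lemma follows from Tonelli and Fubini.

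\textbf{Step 1: identify $U$ as a translation.} Write $U(\boldsymbol{P})=\exp\bigl(\mathrm{i}\sum_j P_j B_j\bigr)$ with $B_j:=\frac{q}{m\hbar}\Phi_\mathrm{S}(\mathrm{i}g_j/\omega)$. The $B_j$ are Segal fields, and $\mathrm{Im}\langle \mathrm{i}g_j/\omega,\mathrm{i}g_l/\omega\rangle=0$ because the $g_j$ are real. Hence the $B_j$ strongly commute, and $\boldsymbol{B}=(B_1,\dots,B_d)$ has a joint spectral measure on $\mathscr{F}_\mathrm{b}(\mathscr{H}_\mathrm{ph})$. On $L^2(\mathbb{R}^d_{\boldsymbol{x}};\mathscr{F}_\mathrm{b})$, conjugation by $\mathcal{F}_d$ turns multiplication by $\mathrm{e}^{\mathrm{i}\boldsymbol{P}\cdot\boldsymbol{B}}$ into the operator-valued translation $U=\mathrm{e}^{\mathrm{i}\boldsymbol{p}\cdot\boldsymbol{B}/\hbar\cdot\hbar}$. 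Concretely, $(U\Psi)(\boldsymbol{x})=\int \mathrm{d}E_{\boldsymbol{B}}(\boldsymbol{b})\,\Psi(\boldsymbol{x}+\hbar\boldsymbol{b})$, with the sign fixed by the Fourier convention. We first establish this on the dense set $\mathcal{S}(\mathbb{R}^d)\hat\otimes\mathscr{F}_\mathrm{b}$ and then extend by unitarity.

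\textbf{Step 2: compute the density.} Applying Step 1 to $U^*$ gives $\psi_\mathrm{dr}(\boldsymbol{x})=\int \psi(\boldsymbol{x}-\hbar\boldsymbol{b})\,\mathrm{d}E_{\boldsymbol{B}}(\boldsymbol{b})\Omega$. Therefore
\[
\|\psi_\mathrm{dr}(\boldsymbol{x})\|^2=\int_{\mathbb{R}^d}|\psi(\boldsymbol{x}-\hbar\boldsymbol{b})|^2\,\mathrm{d}\mu_\Omega(\boldsymbol{b}),
\]
where $\mu_\Omega$ is the spectral measure of $\boldsymbol{B}$ in the vacuum. Since $\langle\Omega,\mathrm{e}^{\mathrm{i}\Phi_\mathrm{S}(f)}\Omega\rangle=\mathrm{e}^{-\|f\|^2/4}$, the measure $\mu_\Omega$ is a centred Gaussian. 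Its covariance is $\frac{q^2}{2m^2\hbar^2}\mathrm{Re}\langle g_j/\omega,g_l/\omega\rangle$. By rotation invariance of $\hat\rho$ and $\sum_\lambda e^{(\lambda)}_je^{(\lambda)}_l=\delta_{jl}-k_jk_l/|\boldsymbol{k}|^2$, this covariance equals $\delta_{jl}\frac{d-1}{d}\frac{q^2}{2m^2\hbar\epsilon_0}\int|\hat\rho|^2/\omega^3$. This computation needs $\omega$ radial; it is automatic in the physical case, and otherwise only the trace form survives and I would check whether isotropy is really needed. After scaling by $\hbar$, the law of $\hbar\boldsymbol{b}$ has covariance $2a\,\delta_{jl}$, with $a$ as in \eqref{eq;def_of_a}. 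Its density is therefore exactly $G_a$, which yields $\|\psi_\mathrm{dr}(\boldsymbol{x})\|^2=(G_a*|\psi|^2)(\boldsymbol{x})$ for a.e. $\boldsymbol{x}$.

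\textbf{Step 3: conclude.} By Tonelli,
\[
\int|V(\boldsymbol{x})|\,\|\psi_\mathrm{dr}(\boldsymbol{x})\|^2\,\mathrm{d}\boldsymbol{x}=\int (G_a*|V|)(\boldsymbol{y})|\psi(\boldsymbol{y})|^2\,\mathrm{d}\boldsymbol{y}=\|(|V|_\mathrm{eff})^{1/2}\psi\|^2,
\]
where $|V|_\mathrm{eff}:=G_a*|V|$ is finite by (V.2). This proves the equivalence. Since $|V_\mathrm{eff}|\le|V|_\mathrm{eff}$, we get $\psi\in\mathrm{dom}(|V_\mathrm{eff}|^{1/2})$. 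Once absolute integrability is known, Fubini applied to $V$ itself gives \eqref{eq;potential}.

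The main obstacle is Step 1: rigorously identifying the fibre-direct-integral unitary with the spectral translation, including measurability and the correct sign and $\hbar$ convention. If this proves awkward, I would instead verify the density formula directly. One computes $\int\overline{\phi(\boldsymbol{x})}\|\psi_\mathrm{dr}(\boldsymbol{x})\|^2\,\mathrm{d}\boldsymbol{x}$ for $\psi$ with $\mathcal{F}_d\psi\in L^1$, using \eqref{superposition rep} and $\langle\Phi_0(\boldsymbol{P}),\Phi_0(\boldsymbol{P}')\rangle=\mathrm{e}^{-a|\boldsymbol{P}-\boldsymbol{P}'|^2/\hbar^2}$. This Gaussian is the Fourier multiplier of convolution by $G_a$. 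The general case then follows by approximation, using Fatou and monotone convergence.
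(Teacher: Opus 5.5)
Your argument is correct, but it takes a different route from the paper's.

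\textbf{The paper's route.} It never computes the fibre density $\|\psi_\mathrm{dr}(\boldsymbol{x})\|^2$. It works in momentum space, first for bounded non-negative $V$ and Schwartz $\psi$. There it matches the triple integral obtained from \eqref{superposition rep} against the Plancherel expression for $\int V_\mathrm{eff}|\psi|^2$, using the overlap $\langle\Phi_0(\boldsymbol{P}),\Phi_0(\boldsymbol{P}')\rangle=\mathrm{e}^{-a|\boldsymbol{P}-\boldsymbol{P}'|^2/\hbar^2}$. It then bootstraps in three stages: density in $\psi$, monotone convergence over the truncations $V_n$ for $V\ge0$, and finally the splitting $V=V_+-V_-$.

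\textbf{Your route.} You prove the $V$-independent identity $\|\psi_\mathrm{dr}(\boldsymbol{x})\|^2=(G_a*|\psi|^2)(\boldsymbol{x})$ for a.e. $\boldsymbol{x}$ and all $\psi\in L^2$. The domain equivalence, the membership $\psi\in\mathrm{dom}(|V_\mathrm{eff}|^{1/2})$ and \eqref{eq;potential} then follow from a single Tonelli/Fubini step for arbitrary $V$, with no truncation or sign splitting. It also makes Welton's picture literal: the dressed electron's position density is the bare one smeared by a Gaussian of covariance $2a$ per coordinate.

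\textbf{The price.} You need the spectral-translation step, which the paper avoids by using only the explicit overlap and Plancherel. That step is cleanest if you restrict to the cyclic subspace of $\Omega$ for $\boldsymbol{B}$, identified with $L^2(\mathbb{R}^d,\mu_\Omega)$. There $U^*$ is the measure-preserving map $\Psi(\boldsymbol{x},\boldsymbol{b})\mapsto\Psi(\boldsymbol{x}-\hbar\boldsymbol{b},\boldsymbol{b})$ on $L^2(\mathrm{d}\boldsymbol{x}\otimes\mu_\Omega)$, and the a.e. density formula follows at once from Fubini.

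\textbf{Your fallback.} It is essentially the paper's computation with a test function in place of $V$. The extension to general $\psi$ there needs no Fatou: $\psi\mapsto\|\psi_\mathrm{dr}(\cdot)\|^2$ and $\psi\mapsto G_a*|\psi|^2$ are both continuous from $L^2$ into $L^1$.

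\textbf{Isotropy.} Your caveat is justified, but it is not specific to your route. The paper's overlap formula, and already $E_0(\boldsymbol{P})=|\boldsymbol{P}|^2/(2m)$, tacitly require $\int|\hat\rho|^2\omega^{-s}k_jk_l|\boldsymbol{k}|^{-2}\,\mathrm{d}\boldsymbol{k}$ (with $s=3$, resp. $s=2$) to be a multiple of $\delta_{jl}$. This holds, e.g., when $\omega$ is radial.
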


\begin{proof}
	We first consider the case where $V$ is bounded and non-negative.
	In this case, $V_\mathrm{eff}$ is also bounded and non-negative. 
	For a Schwartz function $\psi$, it follows from \eqref{superposition rep} that
	\begin{align*}
		\int_{\mathbb{R}^d}V(\boldsymbol{x})\|\psi_\mathrm{dr}(\boldsymbol{x})\|_{\mathscr{F}_\mathrm{b}(\mathscr{H}_\mathrm{ph})}^2\,\mathrm{d}\boldsymbol{x}
		&=\frac{1}{(2\pi\hbar)^{d}}\int_{\mathbb{R}^d}\int_{\mathbb{R}^d}\int_{\mathbb{R}^d}
		V(\boldsymbol{x})\overline{(\mathcal{F}_d\psi)(\boldsymbol{P})}(\mathcal{F}_d\psi)(\boldsymbol{P}^\prime)\\
		&\qquad\times\left\langle\Phi_0(\boldsymbol{P}),\Phi_0(\boldsymbol{P}^\prime)\right\rangle_{\mathscr{F}_\mathrm{b}(\mathscr{H}_\mathrm{ph})}\mathrm{e}^{-\mathrm{i}\boldsymbol{x}\cdot (\boldsymbol{P}-\boldsymbol{P}^\prime)/\hbar}
		\,\mathrm{d}\boldsymbol{P}\,\mathrm{d}\boldsymbol{P^\prime}\,\mathrm{d}\boldsymbol{x}.
	\end{align*}
	We set
	\[
	G(\boldsymbol{x}):=\frac{1}{(4\pi a)^{d/2}}\mathrm{e}^{-|\boldsymbol{x}|^2/(4a)},\qquad \boldsymbol{x}\in\mathbb{R}^d,
	\]
	which satisfies
	\[
	\langle\psi,G(\cdot-\boldsymbol{x})\psi\rangle
	=\langle\mathcal{F}_d\psi,\mathcal{F}_d[G(\cdot-\boldsymbol{x})\psi]\rangle
	=\frac{1}{(2\pi\hbar)^{d/2}}\langle\mathcal{F}_d\psi,[\mathcal{F}_dG(\cdot-\boldsymbol{x})]\ast[\mathcal{F}_d\psi]\rangle
	\]
	and 
	\[
	[\mathcal{F}_dG(\cdot-\boldsymbol{x})](\boldsymbol{P})
	=\frac{1}{(2\pi\hbar)^{d/2}}\mathrm{e}^{-a|\boldsymbol{P}|^2/\hbar^2}\mathrm{e}^{-\mathrm{i}\boldsymbol{x}\cdot\boldsymbol{P}/\hbar},\qquad\boldsymbol{P}\in\mathbb{R}^d.
	\]
	Then, we obtain
	\begin{align*}
		\int_{\mathbb{R}^d}V_\mathrm{eff}(\boldsymbol{y})|\psi(\boldsymbol{y})|^2\,\mathrm{d}\boldsymbol{y}
		&=\int_{\mathbb{R}^d}\int_{\mathbb{R}^d}|\psi(\boldsymbol{y})|^2V(\boldsymbol{x})G(\boldsymbol{y}-\boldsymbol{x})\,\mathrm{d}\boldsymbol{x}\,\mathrm{d}\boldsymbol{y}\\
		&=\int_{\mathbb{R}^d}V(\boldsymbol{x})\langle\psi,G(\cdot-\boldsymbol{x})\psi\rangle\,\mathrm{d}\boldsymbol{x}\\
		&=\frac{1}{(2\pi\hbar)^{d}}\int_{\mathbb{R}^d}\int_{\mathbb{R}^d}\int_{\mathbb{R}^d}
		V(\boldsymbol{x})\overline{(\mathcal{F}_d\psi)(\boldsymbol{P})}(\mathcal{F}_d\psi)(\boldsymbol{P}^\prime)\\
		&\qquad\qquad\qquad\qquad\times\mathrm{e}^{-a|\boldsymbol{P}-\boldsymbol{P}^\prime|^2/\hbar^2}\mathrm{e}^{-\mathrm{i}\boldsymbol{x}\cdot (\boldsymbol{P}-\boldsymbol{P}^\prime)/\hbar}
		\,\mathrm{d}\boldsymbol{P^\prime}\,\mathrm{d}\boldsymbol{P}\,\mathrm{d}\boldsymbol{x}.
	\end{align*}
	This, combined with the fact that
	\[
	\left\langle\Phi_0(\boldsymbol{P}),\Phi_0(\boldsymbol{P}^\prime)\right\rangle
	=\mathrm{e}^{-q^2\|(\boldsymbol{P}-\boldsymbol{P}^\prime)\cdot\boldsymbol{g}/\omega\|^2/(4m^2\hbar^2)}
	=\mathrm{e}^{-a|\boldsymbol{P}-\boldsymbol{P}^\prime|^2/\hbar^2},
	\]
	implies \eqref{eq;potential}.
	By a limiting argument, \eqref{eq;potential} holds for all $\psi\in L^2(\mathbb{R}_{\boldsymbol{x}}^d)$.
	
	We next consider the case where $V$ is non-negative.
	In this case, $V_\mathrm{eff}$ is also non-negative.
	For each $n\in\mathbb{N}$, we define $V_n\in L^\infty(\mathbb{R}_{\boldsymbol{x}}^d)$ by
	\[
	V_n(\boldsymbol{x}):=\begin{cases*}
		V(\boldsymbol{x}) & if $V(\boldsymbol{x})\leq n$, \\
		0                 & if $V(\boldsymbol{x})> n$.
	\end{cases*}
	\]
	Since $V_n$ satisfies the conditions (V.1-2) and (V.2),
	it follows from the previous paragraph that
	\[
	\int_{\mathbb{R}^d}V_n(\boldsymbol{x})\|\psi_\mathrm{dr}(\boldsymbol{x})\|^2\,\mathrm{d}\boldsymbol{x}
	=\int_{\mathbb{R}^d}(V_n)_\mathrm{eff}(\boldsymbol{x})|\psi(\boldsymbol{x})|^2\,\mathrm{d}\boldsymbol{x},\qquad\forall\psi\in L^2(\mathbb{R}_{\boldsymbol{x}}^d),\ n\in\mathbb{N}.
	\]
	Since $V_n$ and $(V_n)_\mathrm{eff}$ are non-decreasing on $n\in\mathbb{N}$, the monotone convergence theorem yields that \eqref{eq;potential} holds for all $\psi\in L^2(\mathbb{R}_{\boldsymbol{x}}^d)$, where both sides may be infinite.
	Thus, the desired results follow.
	
	Finally, we consider the general case.
	Since $|V|$ satisfies the conditions \mbox{(V.1-2)} and (V.2), it follows from the previous paragraph that for any $\psi\in L^2(\mathbb{R}_{\boldsymbol{x}}^d)$, $\psi_\mathrm{dr}\in\mathrm{dom}(|V|^{1/2}\otimes1)$ is equivalent to $\psi\in\mathrm{dom}\left((|V|_\mathrm{eff})^{1/2}\right)$.
	To prove \eqref{eq;potential} for $\psi\in\mathrm{dom}\left((|V|_\mathrm{eff})^{1/2}\right)$, we define $V_\pm:=(|V|\pm V)/2$.
	Then, $V_+$ and $V_-$ satisfy the conditions (V.1-2) and (V.2), and are non-negative.
	From $|V|=V_++V_-$ and the previous paragraph, we obtain $\psi_\mathrm{dr}\in\mathrm{dom}((V_\pm)^{1/2}\otimes1)$ and
	\[
	\int_{\mathbb{R}^d}V_\pm(\boldsymbol{x})\|\psi_\mathrm{dr}(\boldsymbol{x})\|_{\mathscr{F}_\mathrm{b}(\mathscr{H}_\mathrm{ph})}^2\,\mathrm{d}\boldsymbol{x}
	=\int_{\mathbb{R}^d}(V_\pm)_\mathrm{eff}(\boldsymbol{x})|\psi(\boldsymbol{x})|^2\,\mathrm{d}\boldsymbol{x}.
	\]
	Subtracting the two equations, we obtain $\psi\in\mathrm{dom}((V_\mathrm{eff})^{1/2})$ and \eqref{eq;potential}.
	This completes the proof.
\end{proof}

\begin{proof}[Proof of Theorem \ref{thm;eff hamiltonian}]
	We first consider the case where $V$ satisfies (V.1-1).
	By the proof of Theorem \ref{thm;tot_ham}, $V\otimes1$ is infinitesimally form-bounded with respect to $H_0$.
	This, together with Lemma \ref{lem;eff1} and Lemma \ref{lem;eff2}, yields that $V_\mathrm{eff}$ is infinitesimally form-bounded with respect to $-\hbar^2(2m)^{-1}\Delta$.
	Thus, the quadratic form sum $H_\mathrm{eff}$ is defined.
	
	We next consider the case where $V$ satisfies (V.1-2).
	Since $V$ is bounded from below, so is $V_\mathrm{eff}$.
	Moreover, since $V_\mathrm{eff}$ is continuous, $V_\mathrm{eff}$ is in $L_\mathrm{loc}^1(\mathbb{R}_{\boldsymbol{x}}^d)$.
	Thus, the quadratic form sum $H_\mathrm{eff}$ is defined.
\end{proof}

\begin{example}
	We consider the harmonic potential $V(\boldsymbol{x}):=K|\boldsymbol{x}|^2/2$ on $\mathbb{R}^d$, where $K>0$ is a constant.
	By a straightforward computation, we have
	\[
	V_\mathrm{eff}(\boldsymbol{x})=\frac{K}{2}|\boldsymbol{x}|^2+Kda,\qquad\forall\boldsymbol{x}\in\mathbb{R}^d.
	\]
\end{example}

\section{Derivation of the effective Hamiltonian}\label{sect;derivation}

Let $s_\mathrm{tot}$ be the quadratic form associated with $H_\mathrm{tot}$.
By Lemma \ref{lem;eff1} and Lemma \ref{lem;eff2}, we can define a quadratic form $s_\mathrm{eff}$ by
\[
s_\mathrm{eff}(\psi):=s_\mathrm{tot}(\psi_\mathrm{dr}),\qquad \psi\in\mathrm{dom}(|\Delta|^{1/2})\cap\mathrm{dom}\left((|V|_\mathrm{eff})^{1/2}\right).
\]
Then, the effective Hamiltonian $H_\mathrm{eff}$ is characterized as follows:

\begin{theorem}\label{thm;derivation}
	 The quadratic form $s_\mathrm{eff}$ is closed and bounded from below, and the self-adjoint operator associated with $s_\mathrm{eff}$ is $H_\mathrm{eff}$.
\end{theorem}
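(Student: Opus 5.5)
The plan is to show that $s_\mathrm{eff}$ coincides, as a quadratic form including its domain, with the closed form $h_\mathrm{eff}$ of the quadratic form sum $H_\mathrm{eff}$ from Theorem \ref{thm;eff hamiltonian}. Closedness, semiboundedness and the identification of the associated operator then follow at once. By construction, $h_\mathrm{eff}$ has domain $\mathrm{dom}(|\Delta|^{1/2})\cap\mathrm{dom}(|V_\mathrm{eff}|^{1/2})$ and is given by
\[
h_\mathrm{eff}(\psi)=\frac{\hbar^2}{2m}\|(-\Delta)^{1/2}\psi\|^2+\int_{\mathbb{R}^d}V_\mathrm{eff}(\boldsymbol{x})|\psi(\boldsymbol{x})|^2\,\mathrm{d}\boldsymbol{x}.
\]
Likewise, $s_\mathrm{tot}$ has domain $\mathrm{dom}(H_0^{1/2})\cap\mathrm{dom}(|V|^{1/2}\otimes1)$ and equals $\|H_0^{1/2}\Psi\|^2+\int V(\boldsymbol{x})\|\Psi(\boldsymbol{x})\|^2\,\mathrm{d}\boldsymbol{x}$. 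For $\psi$ in the domain of $s_\mathrm{eff}$, Lemma \ref{lem;eff1} and Lemma \ref{lem;eff2} show that $\psi_\mathrm{dr}\in\mathrm{dom}(s_\mathrm{tot})$. Adding the two identities from these lemmas gives $s_\mathrm{eff}(\psi)=h_\mathrm{eff}(\psi)$.

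The step that needs care is the domain comparison. Specifically, I must show that
\[
\mathrm{dom}(|\Delta|^{1/2})\cap\mathrm{dom}\big((|V|_\mathrm{eff})^{1/2}\big)=\mathrm{dom}(|\Delta|^{1/2})\cap\mathrm{dom}(|V_\mathrm{eff}|^{1/2}).
\]
The inclusion $\subset$ is immediate, since $|V_\mathrm{eff}|\le |V|_\mathrm{eff}$ pointwise because the Gaussian kernel is positive. For $\supset$, I split into the two cases of (V.1).

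Under (V.1-1), the proof of Theorem \ref{thm;tot_ham} shows that $|V|\otimes1$ is infinitesimally form-bounded with respect to $H_0$. Combining Lemma \ref{lem;eff1} with Lemma \ref{lem;eff2} applied to $|V|$ (which satisfies (V.1-2) and (V.2)), I get for Schwartz $\psi$
\[
\int|V|_\mathrm{eff}|\psi|^2\le\alpha\frac{\hbar^2}{2m}\|(-\Delta)^{1/2}\psi\|^2+\beta\|\psi\|^2.
\]
Here I use that $\|\psi_\mathrm{dr}\|=\|\psi\|$ by unitarity of $U$. This bound extends by Fatou's lemma to all of $\mathrm{dom}(|\Delta|^{1/2})$. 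Hence $\mathrm{dom}(|\Delta|^{1/2})\subset\mathrm{dom}((|V|_\mathrm{eff})^{1/2})$, and both domains reduce to $\mathrm{dom}(|\Delta|^{1/2})$.

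Under (V.1-2), let $V\ge -c$. Then $V_-\le c$, so $(V_-)_\mathrm{eff}\le c$, and we have $|V|_\mathrm{eff}=(V_+)_\mathrm{eff}+(V_-)_\mathrm{eff}$ and $V_\mathrm{eff}=(V_+)_\mathrm{eff}-(V_-)_\mathrm{eff}$. Therefore $|V|_\mathrm{eff}\le V_\mathrm{eff}+2c\le |V_\mathrm{eff}|+2c$. It follows that $\mathrm{dom}(|V_\mathrm{eff}|^{1/2})\subset\mathrm{dom}((|V|_\mathrm{eff})^{1/2})$.

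With the domains equal and the values equal, $s_\mathrm{eff}=h_\mathrm{eff}$. Since $h_\mathrm{eff}$ is the closed, lower semibounded form defining $H_\mathrm{eff}$ by Theorem \ref{thm;eff hamiltonian}, $s_\mathrm{eff}$ is closed and bounded from below, and its associated self-adjoint operator is $H_\mathrm{eff}$. I expect the (V.1-1) domain step to be the main obstacle, in particular the extension of the form bound from a core to the full form domain. If the Fatou argument proves delicate, I would instead derive the bound directly from Lemma \ref{lem;eff2} for arbitrary $\psi\in\mathrm{dom}(|\Delta|^{1/2})$. This works because that lemma's identity holds in $[0,\infty]$ for the non-negative potential $|V|$, and $\psi_\mathrm{dr}\in\mathrm{dom}(H_0^{1/2})\subset\mathrm{dom}(|V|^{1/2}\otimes1)$ by form-boundedness, so no approximation is needed.
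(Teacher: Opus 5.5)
Your proposal is correct and follows the paper's proof. The paper likewise identifies $s_\mathrm{eff}$ with the form of $H_\mathrm{eff}$: the values agree by Lemmas \ref{lem;eff1} and \ref{lem;eff2}, and the domains agree by the identity of Lemma \ref{lem;form_dom}, which it proves with your case split (form-boundedness of $|V|_\mathrm{eff}$ inherited from $|V|\otimes1$ under (V.1-1), boundedness of $(V_-)_\mathrm{eff}$ under (V.1-2)), while your pointwise bound $|V_\mathrm{eff}|\le|V|_\mathrm{eff}$ and your approximation-free use of Lemma \ref{lem;eff2} in $[0,\infty]$ are harmless refinements.
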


We need the following lemma.

\begin{lemma}\label{lem;form_dom}
	It holds that
	\begin{equation}\label{eq;form_dom}
		\mathrm{dom}(|\Delta|^{1/2})\cap\mathrm{dom}\left((|V|_\mathrm{eff})^{1/2}\right)
		=\mathrm{dom}(|\Delta|^{1/2})\cap\mathrm{dom}(|V_\mathrm{eff}|^{1/2}).
	\end{equation}
\end{lemma}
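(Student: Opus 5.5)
The inclusion ``$\subset$'' is immediate: since $|V_\mathrm{eff}|\leq|V|_\mathrm{eff}$ pointwise (the Gaussian kernel $G$ is positive), any $\psi\in\mathrm{dom}((|V|_\mathrm{eff})^{1/2})$ lies in $\mathrm{dom}(|V_\mathrm{eff}|^{1/2})$; this is also recorded in Lemma \ref{lem;eff2}. The real content is the reverse inclusion, and I would prove it by splitting into the two cases of (V.1).

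Under (V.1-1), the plan is to show $\mathrm{dom}(|\Delta|^{1/2})\subset\mathrm{dom}((|V|_\mathrm{eff})^{1/2})$, after which both sides of \eqref{eq;form_dom} equal $\mathrm{dom}(|\Delta|^{1/2})$. Take $\psi\in\mathrm{dom}(|\Delta|^{1/2})$. By Lemma \ref{lem;eff1}, $\psi_\mathrm{dr}\in\mathrm{dom}(H_0^{1/2})$. The proof of Theorem \ref{thm;tot_ham} shows $\mathrm{dom}(H_0^{1/2})\subset\mathrm{dom}(|\Delta|^{1/2}\otimes1)\subset\mathrm{dom}(|V|^{1/2}\otimes 1)$, because $|V|$ is form-bounded with respect to $-\Delta$ and $U$ commutes with $\Delta\otimes1$. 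Hence $\psi_\mathrm{dr}\in\mathrm{dom}(|V|^{1/2}\otimes1)$, and Lemma \ref{lem;eff2} gives $\psi\in\mathrm{dom}((|V|_\mathrm{eff})^{1/2})$.

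Under (V.1-2), write $V\geq -c$ with $c\geq0$. Then $|V|\leq V+2c$ pointwise, and because the kernel $G$ is positive with integral one, averaging gives $|V|_\mathrm{eff}\leq V_\mathrm{eff}+2c$. Also $V_\mathrm{eff}\geq -c$, so $V_\mathrm{eff}+2c\leq |V_\mathrm{eff}|+2c$. Thus $|V|_\mathrm{eff}\leq |V_\mathrm{eff}|+2c$, and any $\psi\in\mathrm{dom}(|V_\mathrm{eff}|^{1/2})$ satisfies $\int|V|_\mathrm{eff}|\psi|^2<\infty$. This gives the reverse inclusion even without intersecting with $\mathrm{dom}(|\Delta|^{1/2})$.

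The main obstacle is the first case, where there is no pointwise comparison of $|V|_\mathrm{eff}$ with $|V_\mathrm{eff}|$ because cancellations in $V_\mathrm{eff}$ can occur. The detour through the dressed states and the form bound relative to $H_0$ avoids this. The facts to verify are that $\mathrm{dom}(H_0^{1/2})\subset\mathrm{dom}(|V|^{1/2}\otimes1)$, which is exactly the inequality established in the proof of Theorem \ref{thm;tot_ham}, and that Lemma \ref{lem;eff2} is applicable to $\psi\in L^2$ in full generality, which it is.
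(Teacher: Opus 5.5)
Your proposal is correct and follows essentially the same route as the paper. In case (V.1-1) the paper appeals to the infinitesimal form-boundedness of $|V|_\mathrm{eff}$ obtained in the proof of Theorem \ref{thm;eff hamiltonian} from Lemmas \ref{lem;eff1} and \ref{lem;eff2}, which is the dressed-state chain you write out explicitly; in case (V.1-2) it uses $|V|_\mathrm{eff}=V_\mathrm{eff}+2(V_-)_\mathrm{eff}$ with $(V_-)_\mathrm{eff}$ bounded, which is your pointwise bound $|V|_\mathrm{eff}\leq V_\mathrm{eff}+2c$ in another form.
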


\begin{proof}
	We first consider the case where $V$ satisfies the condition (V.1-1).
	Since $|V|$ also satisfies (V.1-1) and (V.2), it follows from the proof of Theorem \ref{thm;eff hamiltonian} that
	$V_\mathrm{eff}$ and $|V|_\mathrm{eff}$ are infinitesimally form-bounded with respect to $-\hbar^2(2m)^{-1}\Delta$.
	Thus, both sides of \eqref{eq;form_dom} coincide with $\mathrm{dom}(|\Delta|^{1/2})$.
	
	We next consider the case where $V$ satisfies the condition (V.1-2).
    Define $V_\pm:=(|V|\pm V)/2$.
    Then, $V_-$ and $(V_-)_\mathrm{eff}$ are bounded, and thus $\mathrm{dom}\left((|V|_\mathrm{eff})^{1/2}\right)=\mathrm{dom}(|V_\mathrm{eff}|^{1/2})$ holds.
\end{proof}

\begin{proof}[Proof of Theorem \ref{thm;derivation}]
	This follows from Lemma \ref{lem;eff1}, Lemma \ref{lem;eff2} and Lemma \ref{lem;form_dom}.
\end{proof}

Roughly speaking, $H_\mathrm{eff}$ is the unique self-adjoint operator satisfying
\[
\langle\psi_\mathrm{dr},H_\mathrm{tot}\psi_\mathrm{dr}\rangle=\langle\psi,H_\mathrm{eff}\psi\rangle,\qquad\forall \psi\in L^2(\mathbb{R}_{\boldsymbol{x}}^d).
\]
However, both sides do not necessarily make sense.
To make this characterization rigorous, we have used the theory of quadratic forms.

\begin{corollary}
	It holds that
	\[
	\inf\sigma\left(-\frac{\hbar^2}{2m}\Delta\dot{+}V\right)\leq\inf\sigma(H_\mathrm{tot})\leq\inf\sigma(H_\mathrm{eff}),
	\]
where $\sigma(A)$ denotes the spectrum of an operator $A$.
\end{corollary}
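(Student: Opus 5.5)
The two inequalities are proved separately. The upper bound follows from the variational principle and Theorem \ref{thm;derivation}. The lower bound follows from an operator inequality between $H_0$ and $-\frac{\hbar^2}{2m}\Delta\otimes1$, together with a tensor-product identification of the infimum of the spectrum.

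\emph{Upper bound.} Let $\psi\in\mathrm{dom}(|\Delta|^{1/2})\cap\mathrm{dom}(|V_\mathrm{eff}|^{1/2})$ with $\|\psi\|=1$. By Lemma \ref{lem;form_dom}, $\psi$ lies in the form domain of $s_\mathrm{eff}$. Since $U$ is unitary and $\|\Omega\|=1$, we have $\|\psi_\mathrm{dr}\|=1$. Theorem \ref{thm;derivation} then gives
\[
\inf\sigma(H_\mathrm{tot})\leq s_\mathrm{tot}(\psi_\mathrm{dr})=s_\mathrm{eff}(\psi)=\langle H_\mathrm{eff}^{1/2}\text{-form of }\psi\rangle ,
\]
that is, the value at $\psi$ of the closed form of $H_\mathrm{eff}$. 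Here the first inequality is the variational characterization $\inf\sigma(A)=\inf\{a(\Phi):\Phi\in\mathrm{dom}(a),\ \|\Phi\|=1\}$ for the closed form $a$ of a lower-bounded self-adjoint operator $A$. Taking the infimum over $\psi$ and applying the same characterization to $H_\mathrm{eff}$ yields $\inf\sigma(H_\mathrm{tot})\leq\inf\sigma(H_\mathrm{eff})$.

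\emph{Lower bound.} Set $h:=-\frac{\hbar^2}{2m}\Delta\dot{+}V$. This form sum exists by the same arguments as in Theorem \ref{thm;tot_ham}, with $H_0$ replaced by $-\frac{\hbar^2}{2m}\Delta$. The key step is the form inequality $H_0\geq-\frac{\hbar^2}{2m}\Delta\otimes1$, meaning $\mathrm{dom}(H_0^{1/2})\subset\mathrm{dom}(|\Delta|^{1/2}\otimes1)$ and $\|H_0^{1/2}\Psi\|^2\geq\frac{\hbar^2}{2m}\|(|\Delta|^{1/2}\otimes1)\Psi\|^2$. This inequality is exactly what the proof of Theorem \ref{thm;tot_ham} establishes: it uses that $U$ commutes with $(-\Delta\otimes1)^{1/2}$ together with \eqref{eq;diag of H_0} and $\mathrm{d}\Gamma_\mathrm{b}(\hbar\omega)\geq0$. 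Adding the form of $V\otimes1$ on the form domain of $H_\mathrm{tot}$ gives, for every normalized $\Psi$ in that domain,
\[
s_\mathrm{tot}(\Psi)\geq s_{h\otimes1}(\Psi)\geq\inf\sigma(h\otimes1)=\inf\sigma(h).
\]
The first inequality requires that the form domain of $H_\mathrm{tot}$ be contained in that of $h\otimes1$. This holds because both are given by intersections with $\mathrm{dom}(|V|^{1/2}\otimes1)$, in case (V.1-2) directly, and in case (V.1-1) by infinitesimal form-boundedness. The identity $\inf\sigma(h\otimes1)=\inf\sigma(h)$ is standard. To see it, take the spectral decomposition of $h$ and use that $s_{h\otimes1}(\Psi)=\int\|\cdot\|^2\,d\mu$ is an integral of $\lambda\geq\inf\sigma(h)$ against a probability measure. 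The form $s_{h\otimes1}$ is the form of $h\otimes1$, the closure of the form on $\mathrm{dom}(|h|^{1/2})\hat\otimes\mathscr{F}_\mathrm{b}$, and it coincides with $(-\frac{\hbar^2}{2m}\Delta\otimes1)\dot+(V\otimes1)$.

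The main obstacle is this domain bookkeeping: identifying $(-\frac{\hbar^2}{2m}\Delta\otimes1)\dot+(V\otimes1)$ with $h\otimes1$, and checking the inclusion of form domains. I would handle it by computing both forms as closures on the common core $C_0^\infty\hat\otimes\mathrm{dom}(\mathrm{d}\Gamma_\mathrm{b}(\hbar\omega)^{1/2})$ in case (V.1-2). In case (V.1-1) the form domains are simply $\mathrm{dom}(H_0^{1/2})\subset\mathrm{dom}(|\Delta|^{1/2}\otimes1)$, so the inclusion is immediate.
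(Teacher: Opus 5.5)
Your proposal is correct and follows the same approach as the paper. The upper bound comes from the variational principle applied via Theorem~\ref{thm;derivation} to $s_\mathrm{tot}(\psi_\mathrm{dr})=s_\mathrm{eff}(\psi)$, and the lower bound comes from the form inequality $H_0\geq-\frac{\hbar^2}{2m}\Delta\otimes1$, which you obtain from \eqref{eq;diag of H_0}, $\mathrm{d}\Gamma_\mathrm{b}(\hbar\omega)\geq0$ and the commutation of $U$ with $\Delta\otimes1$, exactly as the paper indicates via Arai's argument. Your additional domain bookkeeping just fills in details the paper leaves implicit, and the identification of the form of $h\otimes1$ is easiest by expanding $\Psi$ in an orthonormal basis of $\mathscr{F}_\mathrm{b}(\mathscr{H}_\mathrm{ph})$; this avoids needing $C_0^\infty$ to be a form core for $h$.
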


\begin{proof}
	The second inequality follows from Theorem \ref{thm;derivation} and the variational principle.
	The first inequality follows from the same argument as in \cite[(3.24)]{MR1075749}; namely, it follows from \eqref{eq;diag of H_0}, the non-negativity of $\mathrm{d}\Gamma_\mathrm{b}(\hbar \omega)$, and the fact that $\Delta\otimes1$ commutes with $U$.
	
\end{proof}


\section{Concluding remarks}\label{sect;concluding_remarks}

We briefly comment on a variation of $H_0$ in which the electron mass in the interaction term is renormalized:
\[
H_0:=-\frac{\hbar^2}{2m_0}\Delta\otimes1+1\otimes\mathrm{d}\Gamma_\mathrm{b}(\hbar \omega)-\frac{q}{m_0}\sum_{j=1}^dp_j\otimes A_j(\boldsymbol{0}).
\]
In this case, the relation between the bare mass $m_0$ and the observed mass $m$ is given by
\[
\frac{1}{m}=\frac{1}{m_0}-\frac{d-1}{d}\frac{q^2}{\epsilon_0m_0^2}\int_{\mathbb{R}^d}\frac{|\hat{\rho}(\boldsymbol{k})|^2}{\omega(\boldsymbol{k})^2}\,\mathrm{d}\boldsymbol{k},
\]
where the right-hand side is assumed to be positive.
By the same arguments as above, the resulting effective Hamiltonian is given by \eqref{eq;def_of_eff_potential}--\eqref{eq;def_of_Heff} with the observed mass $m$ in \eqref{eq;def_of_a} replaced by the bare mass $m_0$.

From a physical point of view, one may expect that the mass in the interaction term should also be renormalized.
However, in this case, the bare mass still remains in the effective potential.
This may be due to the omission of the $\boldsymbol{A}^2$ term in the interaction.
In this regard, we note that scaling limits without neglecting the $\boldsymbol{A}^2$ term have been investigated by Hiroshima \cite{MR1235953,MR1438035,MR1892752,hiroshima2025weak}.

\section*{Acknowledgements}
I would like to thank Asao Arai and Itaru Sasaki for helpful discussions and for their careful reading of an earlier version of this manuscript.
This work was supported by JSPS KAKENHI Grant Numbers JP23K25783 and JP24K06755.

\section*{Data Availability}
No data were used to support this study.

\section*{Conflicts of interest}
The authors declare that they have no conflict of interest.

\bibliographystyle{plain}
\bibliography{References}

@article {MR1075749,
	AUTHOR = {A. Arai},
	TITLE = {An asymptotic analysis and its application to the
	nonrelativistic limit of the {P}auli-{F}ierz and a spin-boson
	model},
	JOURNAL = {J. Math. Phys.},
	FJOURNAL = {Journal of Mathematical Physics},
	VOLUME = {31},
	YEAR = {1990},
	NUMBER = {11},
	PAGES = {2653--2663},
	ISSN = {0022-2488,1089-7658},
	MRCLASS = {81Q10 (47N50 81V10 81V70)},
	MRNUMBER = {1075749},
	MRREVIEWER = {Giuseppina\ Epifanio},
	DOI = {10.1063/1.528966},
	URL = {https://doi.org/10.1063/1.528966},
}

@article {MR2770092,
	AUTHOR = {A. Arai},
	TITLE = {Spectral analysis of an effective {H}amiltonian in
	nonrelativistic quantum electrodynamics},
	JOURNAL = {Ann. Henri Poincar\'e},
	FJOURNAL = {Annales Henri Poincar\'e. A Journal of Theoretical and
	Mathematical Physics},
	VOLUME = {12},
	YEAR = {2011},
	NUMBER = {1},
	PAGES = {119--152},
	ISSN = {1424-0637,1424-0661},
	MRCLASS = {81V10 (47N50)},
	MRNUMBER = {2770092},
	MRREVIEWER = {Fumio\ Hiroshima},
	DOI = {10.1007/s00023-010-0071-2},
	URL = {https://doi.org/10.1007/s00023-010-0071-2},
}

@book {MR4812858,
	AUTHOR = {A. Arai},
	TITLE = {Analysis on {F}ock spaces and mathematical theory of quantum
	fields---an introduction to mathematical analysis of quantum
	fields},
	EDITION = {Second},
	PUBLISHER = {World Scientific Publishing Co. Pte. Ltd., Hackensack, NJ},
	YEAR = {2025},
	PAGES = {xxxix+1074},
	ISBN = {978-981-12-8842-5; 978-981-12-8845-6; 978-981-12-8844-9},
	MRCLASS = {81-01 (81-02 81Q10 81Q12 81R05 81R10 81T20 83C47)},
	MRNUMBER = {4812858},
}

@article{PhysRev.72.339,
	title = {The Electromagnetic Shift of Energy Levels},
	author = {H. A. Bethe},
	journal = {Phys. Rev.},
	volume = {72},
	issue = {4},
	pages = {339--341},
	numpages = {0},
	year = {1947},
	month = {Aug},
	publisher = {American Physical Society},
	doi = {10.1103/PhysRev.72.339},
	url = {https://link.aps.org/doi/10.1103/PhysRev.72.339}
}

@article {MR2015428,
	AUTHOR = {J. Derezi\'nski},
	TITLE = {Van {H}ove {H}amiltonians---exactly solvable models of the
	infrared and ultraviolet problem},
	JOURNAL = {Ann. Henri Poincar\'e},
	FJOURNAL = {Annales Henri Poincar\'e. A Journal of Theoretical and
	Mathematical Physics},
	VOLUME = {4},
	YEAR = {2003},
	NUMBER = {4},
	PAGES = {713--738},
	ISSN = {1424-0637,1424-0661},
	MRCLASS = {81U05 (47N50 81Q10 81T10)},
	MRNUMBER = {2015428},
	MRREVIEWER = {Fumio\ Hiroshima},
	DOI = {10.1007/s00023-003-0145-5},
	URL = {https://doi.org/10.1007/s00023-003-0145-5},
}

@article {MR1235953,
	AUTHOR = {F. Hiroshima},
	TITLE = {Scaling limit of a model of quantum electrodynamics},
	JOURNAL = {J. Math. Phys.},
	FJOURNAL = {Journal of Mathematical Physics},
	VOLUME = {34},
	YEAR = {1993},
	NUMBER = {10},
	PAGES = {4478--4518},
	ISSN = {0022-2488,1089-7658},
	MRCLASS = {81V10 (81Q10 81T10)},
	MRNUMBER = {1235953},
	MRREVIEWER = {Lech\ Jak\'obczyk},
	DOI = {10.1063/1.530353},
	URL = {https://doi.org/10.1063/1.530353},
}

@article {MR1438035,
	AUTHOR = {F. Hiroshima},
	TITLE = {A scaling limit of a {H}amiltonian of many nonrelativistic
	particles interacting with a quantized radiation field},
	JOURNAL = {Rev. Math. Phys.},
	FJOURNAL = {Reviews in Mathematical Physics. A Journal for Both Review and
	Original Research Papers in the Field of Mathematical Physics},
	VOLUME = {9},
	YEAR = {1997},
	NUMBER = {2},
	PAGES = {201--225},
	ISSN = {0129-055X,1793-6659},
	MRCLASS = {81V10 (81Q10)},
	MRNUMBER = {1438035},
	MRREVIEWER = {Hiroshi\ Isozaki},
	DOI = {10.1142/S0129055X97000075},
	URL = {https://doi.org/10.1142/S0129055X97000075},
}

@article {MR1892752,
	AUTHOR = {F. Hiroshima},
	TITLE = {Observable effects and parametrized scaling limits of a model
	in nonrelativistic quantum electrodynamics},
	JOURNAL = {J. Math. Phys.},
	FJOURNAL = {Journal of Mathematical Physics},
	VOLUME = {43},
	YEAR = {2002},
	NUMBER = {4},
	PAGES = {1755--1795},
	ISSN = {0022-2488,1089-7658},
	MRCLASS = {81V10 (81Q10)},
	MRNUMBER = {1892752},
	MRREVIEWER = {Masao\ Hirokawa},
	DOI = {10.1063/1.1447590},
	URL = {https://doi.org/10.1063/1.1447590},
}

@article{hiroshima2025weak,
	title={The weak coupling limit of the {P}auli-{F}ierz model},
	author={F. Hiroshima},
	journal={arXiv preprint arXiv:2505.07253},
	year={2025}
}

@article{PauliFierz1938,
	author = {W. Pauli and M. Fierz},
	title = {Zur Theorie der Emission langwelliger Lichtquanten},
	journal = {Nuovo Cimento},
	year = {1938},
	volume = {15},
	pages = {167--188},
	doi = {10.1007/BF02958939}
}

@book {MR493420,
	AUTHOR = {M. Reed and B. Simon},
	TITLE = {Methods of modern mathematical physics. {II}. {F}ourier
	analysis, self-adjointness},
	PUBLISHER = {Academic Press [Harcourt Brace Jovanovich, Publishers], New
	York-London},
	YEAR = {1975},
	PAGES = {xv+361},
	MRCLASS = {47-02 (81.47)},
	MRNUMBER = {493420},
	MRREVIEWER = {P.\ R.\ Chernoff},
}

@book {MR2953553,
	AUTHOR = {K. Schm\"{u}dgen},
	TITLE = {Unbounded self-adjoint operators on {H}ilbert space},
	SERIES = {Graduate Texts in Mathematics},
	VOLUME = {265},
	PUBLISHER = {Springer, Dordrecht},
	YEAR = {2012},
	PAGES = {xx+432},
	ISBN = {978-94-007-4752-4},
	MRCLASS = {47-01 (47B25 47E05)},
	MRNUMBER = {2953553},
	MRREVIEWER = {G. V. Rozenblum},
	DOI = {10.1007/978-94-007-4753-1},
	URL = {https://doi.org/10.1007/978-94-007-4753-1},
}

@book {MR455975,
	AUTHOR = {B. Simon},
	TITLE = {Quantum mechanics for {H}amiltonians defined as quadratic
	forms},
	SERIES = {Princeton Series in Physics},
	PUBLISHER = {Princeton University Press, Princeton, NJ},
	YEAR = {1971},
	PAGES = {xv+244},
	MRCLASS = {81.00},
	MRNUMBER = {455975},
	MRREVIEWER = {Thomas\ Spencer},
}

@article{PhysRev.74.1157,
	title = {Some Observable Effects of the Quantum-Mechanical Fluctuations of the Electromagnetic Field},
	author = {T. A. Welton},
	journal = {Phys. Rev.},
	volume = {74},
	issue = {9},
	pages = {1157--1167},
	numpages = {0},
	year = {1948},
	month = {Nov},
	publisher = {American Physical Society},
	doi = {10.1103/PhysRev.74.1157},
	url = {https://link.aps.org/doi/10.1103/PhysRev.74.1157}
}

\end{document}